\newtheorem{Theo}{Theorem}
\renewcommand{\maketag@@@}[1]{\hbox{\m@th\normalsize\normalfont#1}}%
\begin{document}
\title{A 5G DMRS-based Signal for Integrated Sensing and Communication System}
\author{Zhiqing Wei,~\IEEEmembership{Member,~IEEE},
		Fengyun Li, 
		Haotian Liu,
		Xu Chen,~\IEEEmembership{Student Member,~IEEE},\\
		Huici Wu,~\IEEEmembership{Member,~IEEE}, Kaifeng Han,~\IEEEmembership{Member,~IEEE}
		Zhiyong Feng,~\IEEEmembership{Senior Member,~IEEE}

\thanks{
	
	Zhiqing Wei, Fengyun Li, Haotian Liu, Xu Chen, Huici Wu, and Zhiyong Feng are with Beijing University of Posts and Telecommunications, Beijing 100876, China (emails: weizhiqing@bupt.edu.cn,  lfy@bupt.edu.cn, 15738993901@163.com, chenxu96330@bupt.edu.cn, dailywu@bupt.edu.cn, fengzy@bupt.edu.cn).

Kaifeng Han is with China Academy of Information and Communications Technology,
Beijing 100191, China.
(email: hankaifeng@caict.ac.cn).}}

\maketitle

\begin{abstract}

Integrated sensing and communication (ISAC) system can not only make full use of precious spectrum resources and simplify hardware equipment, but also combine the advantages of both to make communication and sensing, thereby obtaining greater performance gain. This paper focuses on the design of integrated sensing and communication signals in the Internet of Vehicles scenario, based on  5G NR uses demodulation reference signal (DMRS) to achieve sensing without changing the original communication waveform. Communication signals and sensing signals do not interfere with each other. The root mean square error of ranging and velocity are obtained at 1.2603 m and 0.6199 m/s. The sensing performance was analyzed from the two perspectives of simulation results and theoretical derivation, and the influence of center frequency, number of IFFT/ FFT points, subcarrier spacing on the accuracy of ranging and speed measurement were summarized. 
\end{abstract}
\begin{keywords}
 Integrated Sensing and Communication, 5G NR, Reference Signal, Cr\'{a}mer-Rao Lower Bound.
\end{keywords}
\IEEEpeerreviewmaketitle

\section{Introduction}
\label{sec1}

 With the official approval of a series of new standards for Release 18 (Rel-18) at the "3rd Generation Partnership Project" plenary meeting (3GPP RAN) at the end of 2021, the standardization of the fifth generation(5G) mobile systems has officially entered the "second half", namely 5G-advanced (5G-A) stage[1]. 5G-A and the sixth generation (6G) mobile communication systems will support services such as Internet of vehicles and extended reality (XR). With the intelligent transportation system (ITS) as an example, the car can detect the surrounding driving environment and exchange information such as vehicle velocity, acceleration, braking, and driving direction with base stations to ensure road safety and health of people. On the other hand, it needs to support the fusion and sharing of sensing information. This requires the vehicle to have both high-quality sensing and communication capabilities. ISAC technology is an inevitable trend in the development of the Internet of vehicles. Waveform design is the first task in realizing ISAC technology.

There are three types of integrated waveform design directions:(1) The integrated waveform is based on radar waveform, on which the communication data is modulated, such as linear frequency modulated (LFM) [2] and frequency modulated continuous wave (FMCW) [3]. Because the communication rate of low-order modulation is not high, [4] uses the LFM signal as the carrier of the quadrature amplitude modulation (QAM) signal. The ambiguity function of the MQAM-LFM signal is close to the pushpin type, but considering the increase of signal amplitude types, the dynamic range of the signal becomes larger, and the linearity of the RF front-end amplifier is required to be higher.
(2) The new integrated waveform design is designed according to the objectives and constraints of radar and communication in ISAC system [5]. [6] optimizes the precoding matrices of radar and communication waveforms respectively, the minimum signal-to-interference-noise ratio (SINR) of the communication user end is constrained and the transmission is minimized. The dual-function waveform is not limited by conventional waveforms and can achieve a compromise between radar and communication performance. However, there are still certain challenges in terms of computational complexity and hardware cost. For example, the generation of dedicated waveforms involves the solution of complex optimization problems and depends on the state information of communication channels, which is not easy to obtain in high-velocity mobile scenarios [7].
(3) The integrated waveform is dominated by communication waveforms, generally orthogonal frequency division multiplexing (OFDM) signals. [8] compensates and decoherents the echo communication information, and uses subspace projection to achieve joint high-resolution estimation of range and velocity. The random communication information on different subcarriers will affect the fuzzy function and reduce the sensing performance. [9] performs weighted preprocessing on the integrated signal, uses the whale optimization algorithm to set the weighting coefficient to optimize its fuzzy function. Pilot frequency, synchronization and cyclic prefix will lead to the appearance of spurious peaks in the point spread function of radar matched filtering, which seriously affects the sensing performance. [10] proposed a method to push the point spread function side lobes and spurious peaks of OFDM signals outside the radar observation window. The disadvantage of high peak-to-average power ratio (PAPR) of OFDM waveforms results in low power efficiency in radar applications. [11] studied the control of the maximum PAPR of the waveform by weighting. 

Sensing requires signals with structured and strong autocorrelation characteristics, while communication requires random signals to maximize the data rate [12]. There is a contradiction between the two. The above integrated signals all need to transform the original signal, and there are various limitations. Reference signal in 5G NR has good passive detection performance and strong anti-interference ability [13] , and its sensing ability has attracted extensive attention. We proposes a signal design method to achieve integrated sensing and communication based on the 5G NR standard, and expands the sensing function of the reference signal in 5G NR. The contributions are summarized as follows.

%%------------contribution----------------------
\begin{itemize}
	{\item Based on DMRS and reference signals, a system structure for ISAC is established, and a model for transmitting and receiving signals is derived.}
	
	{\item Study the radar speed measurement and ranging algorithm, and simulate and verify the feasibility of the designed signal.}
	
	{\item The sensing performance is derived from a theoretical point of view. The specific evaluation criteria include resolution, maximum sensing range and Cramero boundary. The factors affecting sensing error are quantitatively discussed to verify the qualitative research conclusion.}
\end{itemize}

The structure of this paper is as follows. 
Section \uppercase\expandafter{\romannumeral2} proposes the ISAC signal model based on DMRS and designs the signal processing scheme. 
Section \uppercase\expandafter{\romannumeral3} analyzes and derives the 
resolutions, the maximum detection distance/velocity and CRLBs of the proposed ISAC signal.  
In Section \uppercase\expandafter{\romannumeral4}, Matlab simulation was used to implement radar detection.
Simulation results also reveal the influence of multiple parameters such as carrier frequency and 
subcarrier spacing on sensing accuracy. 
Section \uppercase\expandafter{\romannumeral5} concludes this paper and discusses the future work.

\section{ISAC Signal Model}
\label{sec2}

In this section, a system based on mathematical model and system structure of ISAC signal were developed, and the signal processing scheme of ISAC signal was designed to facilitate radar signal processing.

\subsection{Time-frequency Domain Mapping Scheme of DMRS}

The mapping type of DMRS determines the starting position of DMRS symbols in the time domain. In order to occupy rich time domain resources, this article chooses mapping type A, where the first DMRS symbol is located at symbol \#2. After selecting the time domain mapping type, continue to configure frequency domain resources for DMRS. In order to occupy more frequency domain resources, this article chooses type 1. DMRS RE is distributed in the frequency domain interval of a certain symbol with a density of 50 \%.

The DMRS structure of 5G NR is composed of a front-end DMRS and an additional DMRS with configurable time domain density. The pattern of the additional DMRS is the same as the front-end DMRS. The pre-DMRS must exist in each scheduling time unit. In order to reduce the delay of demodulation and d pre-DMRS ecoding, the position where the DMRS first appears should be as close as possible to the starting point of scheduling. In medium and high-velocity scenarios, more DMRS symbols need to be inserted within the scheduling duration to meet the estimation accuracy of the time-varying channel, that is, additional DMRS. This article chooses the mode of single symbol plus 3 additional DMRS, as shown in Fig. \ref{fig_DMRS_time_frequency_domain}, in order to have the richest time-frequency domain resources and obtain optimal sensing performance.

\begin{figure}[!t]
	\centering
  	\includegraphics[width=0.5\textwidth]{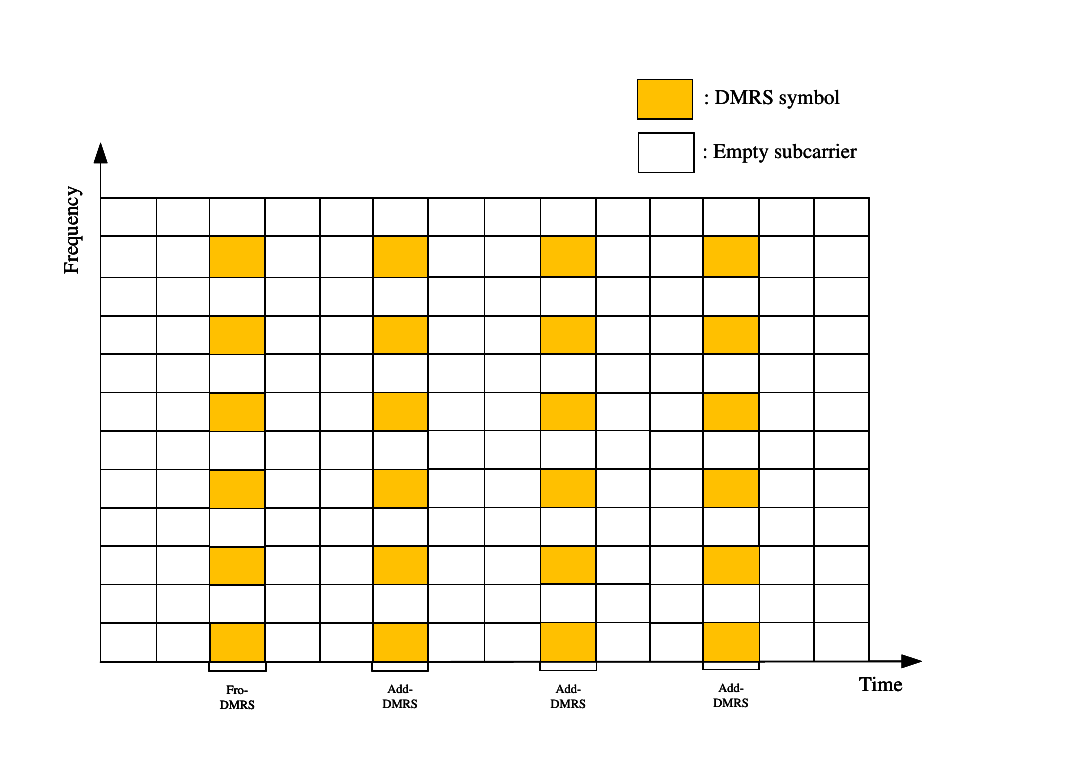}
	\caption{Single DMRS symbol plus three additional DMRS symbols} \label{fig_DMRS_time_frequency_domain}
\end{figure}

\subsection{The Transmitted and Received Signal Model}

The DMRS occupying $M_J$ OFDM symbols and $N_J$ subcarriers is expressed as 
\begin{equation}
\label{eq_5}\small	
x(t)=\sum_{m=0}^{M_{J}-1}\sum_{k=0}^{N_{J}-1}\mathbf{s}(k,m)e^{j2\pi f_{k}t} {\rm {rect}}(\frac{t-mT_{\rm s}}{T_{\rm s}}),
\end{equation}
where $\mathbf{s}$ is the modulated reference signal symbol, 
$k$ is the subcarrier index, $m$ is the OFDM symbol index, $T_{\rm s}=T+T_{\rm{cp}}$ is composed of the OFDM symbol duration $T$ and the cyclic prefix duration $T_{\rm{cp}}$,
$f_k$ is the $k$-th subcarrier carrying the reference signal, 
and ${\rm {rect}} (t/{T_{\rm s}})$ is the rectangle window function. Due to the superposition of multiple subcarriers, the signal has an approximately constant power spectral density, which makes the signal appear to have noise-like characteristics, and has good low interception characteristics for radar detection.

The transmitted signal is reflected by the target vehicle and propagated in free space, and the target echo signal including the two-way round-trip delay and Doppler frequency shift can be expressed as
\begin{equation}
\label{eq_6}\small
\begin{split}
\begin{aligned}
d_{\rm{rx}}(t)=&\sum_{m=0}^{M_{J}-1}\sum_{k=0}^{N_{J}-1} \xi \mathbf{S}_{\rm{tx}}(k,m)e^{j2\pi f_{k}(t- \tau)}e^{j2\pi f_{d,r}t} \\   &{\rm {rect}}(\frac{t-mT_{\rm s}- \tau}{T_{\rm s}}),
\end{aligned}
\end{split}
\end{equation}
where $\xi$ is the attenuation factor during the transmission process and is constant, $\mathbf{S}_{\rm{tx}}$ is is the modulation symbol of transmission, 
$c$ is the velocity of light,
$f_{d,r}$ is the Doppler frequency shift,
and $\tau=\frac{2R}{c}$ is delay with $R$ representing the range of target.

Simplifying (\ref{eq_6}), we have
\begin{equation}
\label{eq_7}\small	
d_{\rm{rx}}(t)=\sum_{m=0}^{M_{J}-1}\sum_{k=0}^{N_{J}-1}  \mathbf{S}_{\rm{rx}}(k,m)e^{j2\pi f_{k}t} {\rm {rect}}(\frac{t-mT_{\rm s}- \tau}{T_{\rm s}}),
\end{equation}
where $\mathbf{S}_{\rm{rx}}$ is the received modulation symbol. Each transmitted modulation symbol will have different degrees of fading in the integrated channel of sensing and communication. $\mathbf{S}_{\rm{rx}}(k,m)$ can be expressed as
\begin{equation}
\begin{split}
\begin{aligned}
\label{eq_9}\small
\mathbf{S}_{\rm{rx}}(k,m)&=\xi \mathbf{S}_{\rm{tx}}(k, m) e^{j2 \pi K_{\rm{comb}}^{\rm{symbol}}mT_{s}f_{d,r}}e^{-j2 \pi K_{\rm{comb}}^{\rm{carrier}}k \Delta f \frac{2R}{c}},
\end{aligned}
\end{split}
\end{equation}
where $K_{\rm{comb}}^{\rm{symbol}}$ is the period of symbol interval, 
$K_{\rm{comb}}^{\rm{carrier}}$ is the period of subcarrier interval, 
$\Delta f=1/T$ is the subcarrier spacing, $N \times 1$ 
dimensional vector $\mathbf{k}_{d}$ is the phase shift caused by delay, which represents the distance information of target, 
$1 \times M$ dimensional vector $\mathbf{k}_{v}$ is the phase shift caused by Doppler shift, which represents the velocity information of target, 
and $\otimes$ refers to Kronecker product.

The above equation shows that in the case of a fixed subcarrier, the Doppler frequency shift caused by the velocity change will only introduce a phase shift between different modulation symbols on the same subcarrier, that is, only affect the time domain. In the case of a fixed OFDM symbol, the time delay caused by the range change will only cause a phase shift between different subcarriers of the same symbol, that is, only affect the frequency domain. Based on this, the time domain and frequency domain can be processed separately to obtain time delay and Doppler frequency shift information respectively, and thus obtain range and velocity information.

The ISAC signal processing is similar to the OFDM system process. The difference is that the DMRS sequence needs to be embedded at the transmitter. That is, before digital modulation, the DMRS composed of the Gold sequence is mapped to the corresponding OFDM symbol according to the time-frequency domain mapping scheme. And then digital modulation, after target reflection and free space propagation in the channel, the receiving end performs operations on the echo signal. After parallel-to-serial conversion, a suitable radar signal processing algorithm is used to extract the distance and speed information of the target.

\section{Sensing Performance Analysis}
\label{sec3}

In this section, we introduces the signal processing process of DMRS using the sturm algorithm, and provides an operation method for processing ISAC signals for subsequent theoretical derivation.

\subsection{Resolution of Distance and Velocity Estimation}

Performing IFFT on the $m$-th column of $\mathbf{S}_g$, we have
\begin{equation}
\label{eq_14}\small
{\rm{IFFT}}(\mathbf{S}_{g,m}(k))=\sum_{k=0}^{N-1}e^{-j2 \pi K_{\rm{comb}}^{\rm{carrier}} k \Delta f \frac{2R}{c}} \times e^{\frac{j2 \pi rk}{N}}.
\end{equation}
Peaks occur when the exponents cancel each other, and the estimated range is written as $\hat{R}$, the number of columns at the peak is recorded as  $ind_{s_{g},m}$, $\hat{R}$ is 
\begin{equation}
\label{eq_15}\small
\hat{R}=\frac{c \hat{r}}{2N K_{\rm{comb}}^{\rm{carrier}} \Delta f}=\frac{ind_{s_{g},m}c}{2N K_{\rm{comb}}^{\rm{carrier}} \Delta f}, \quad ind_{s_{g},m} \in\{0,1, \ldots, N-1\}.
\end{equation}

The maximum detection distance is
\begin{equation}
\label{eq_16}\small
R_{max}=\frac{c}{2 K_{\rm{comb}}^{\rm{carrier}} \Delta f}.
\end{equation} 
where $K_{\rm{comb}}^{\rm{carrier}} \in \{ 2,4\} $ is the comb size of DMRS in frequency domain. 

The distance resolution is
\begin{equation}
\label{eq_17}\small
\Delta R=\frac{c}{2N K_{\rm{comb}}^{\rm{carrier}} \Delta f}.
\end{equation}

It can be known from the above derivation that the maximum detection range can be increased by decreasing $\Delta f$ and the range resolution can be improved by increasing the number of subcarriers and $\Delta f$. $f_c$ does not affect ranging performance. There is a trade-off between maximum detection range and range resolution when choosing the subcarrier interval.

Performing FFT on the $k$-th row of $\mathbf{S}_g$, we have
\begin{equation}
\label{eq_18}\small
{\rm{FFT}} (\mathbf{S}_{g,k}(m))=\sum_{m=0}^{M-1}e^{j2 \pi K_{\rm{comb}}^{\rm{symbol}} mT_{s}f_{d,r}} \times e^{-\frac{j2 \pi dm}{M}}.
\end{equation}
Peaks occur when the exponents cancel each other, and the estimated Doppler shift is recorded as $\hat{f}_{d,r}$, the number of lines at the peak is recorded as $ind_{s_{g},k}$, $\hat{f}_{d,r}$ is 
\begin{equation}
\label{eq_19}\small
\hat{f}_{d,r}=\frac{ind_{s_{g},k}}{MK_{\rm{comb}}^{\rm{symbol}}T_s}.
\end{equation}

The estimated velocity is 
\begin{equation}
\label{eq_21}\small
\hat{v}=\frac{ind_{s_{g},k} c}{2M K_{\rm{comb}}^{\rm{symbol}}T_s f_c}, \quad ind_{s_{g},k} \in\{0,1, \ldots, M-1\}.
\end{equation}

The maximum detection velocity is 
\begin{equation}
\label{eq_22}\small
v_{max}=\frac{c}{2 K_{\rm{comb}}^{\rm{symbol}} T_s f_c}.
\end{equation}
where $K_{\rm{comb}}^{\rm{symbol}} \in \{ 3,4,6,12\} $  is the comb size of DMRS in time domain. 
The velocity resolution is 
\begin{equation}
\label{eq_23}\small
\Delta v=\frac{c}{2MK_{\rm{comb}}^{\rm{symbol}} T_s f_c}.
\end{equation}

From the above derivation, it can be known that the maximum estimation velocity can be increased by reducing $T_s$ and increasing $f_c$. The maximum estimation velocity of the ISAC signal operating at 5.9 GHz is higher than that of the one operating at 24 GHz. The velocity resolution can be improved by increasing the number of symbols, $T_s$ and $f_c$. The velocity resolution of the reference signal operating at 24 GHz is higher than that of the one operating at 5.9 GHz. There is a trade-off between the maximum estimated velocity and velocity resolution when choosing the symbol time and center frequency.

\subsection{CRLB of Distance and Velocity Estimation}

\begin{Theo}\label{th_e_f}
	The CRLBs of DMRS for distance 
	and velocity estimation are 
\begin{equation}
	\label{eq_CRLB_R_DMRS}\footnotesize
	CRLB\left({R} \right){\rm{ = }}\frac{{{c^2}}}{{{\xi ^2} \cdot \gamma  \cdot {{({\rm{2}}\pi \Delta f)}^2} \cdot K_{{\rm{comb}}}^{{\rm{symbol}}}}} \cdot \frac{{{\rm{12}}}}{{{M_J}N({N_J} - {\rm{1}})({\rm{7}}{N_J} + {\rm{1}})}},
\end{equation}
\begin{equation}
	\label{eq_CRLB_V_DMRS}\footnotesize
	CRLB\left({v} \right){\rm{ = }}\frac{{{c^2}}}{{{\xi ^2} \cdot \gamma  \cdot {{({\rm{2}}\pi {T_s})}^2}{f_c} \cdot K_{{\rm{comb}}}^{{\rm{carrier}}}}} \cdot \frac{{{\rm{12}}}}{{{N_J}M({M_J} - {\rm{1}})({\rm{7}}{M_J} + {\rm{1}})}}.
\end{equation}
	with $\gamma=\frac{A^2}{\sigma ^2}$ is the signal-to-noise ratio (SNR).
\end{Theo}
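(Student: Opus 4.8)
\emph{Proof plan.} The plan is to obtain both bounds as diagonal entries of the inverse Fisher information matrix (FIM) of a deterministic signal in additive white Gaussian noise, through the Slepian--Bangs formula --- the same machinery that yields the classic CRLBs for tone-frequency and time-delay estimation. First I would fix the statistical model: after OFDM demodulation and the element-wise equalization step (dividing each received DMRS resource element by its known, constant-modulus transmit symbol of amplitude $A$), the working data are the $N_{J}\times M_{J}$ grid
\[
\mathbf{S}_{g}(k,m)=\xi\,e^{\,j2\pi K_{\rm{comb}}^{\rm{symbol}}mT_{s}f_{d,r}}\,e^{-j2\pi K_{\rm{comb}}^{\rm{carrier}}k\Delta f\frac{2R}{c}}+\tilde{w}(k,m),
\]
with $k=0,\dots,N_{J}-1$, $m=0,\dots,M_{J}-1$ and $\tilde{w}(k,m)$ i.i.d.\ $\mathcal{CN}(0,\sigma^{2}/A^{2})$, so the noiseless part carries power $\xi^{2}$ and the effective per-sample SNR is $\gamma=A^{2}/\sigma^{2}$. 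The unknowns are the two-way delay $\tau=2R/c$ and the Doppler shift $f_{d,r}=2vf_{c}/c$; since $\tau\mapsto R$ and $f_{d,r}\mapsto v$ are linear, a CRLB for $(\tau,f_{d,r})$ transfers to $(R,v)$ by the Jacobian rule, which is the origin of the $c^{2}/(2\pi\Delta f)^{2}$ and $c^{2}/((2\pi T_{s})^{2}f_{c})$ prefactors.

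Next I would compute the FIM. Because $R$ and $v$ enter $\mathbf{S}_{g}$ only through phases that are linear in $k$ and $m$ respectively, $\partial_{R}\mathbf{S}_{g}=-j\frac{4\pi K_{\rm{comb}}^{\rm{carrier}}\Delta f}{c}\,k\,\mathbf{S}_{g}$ and $\partial_{v}\mathbf{S}_{g}=j\frac{4\pi K_{\rm{comb}}^{\rm{symbol}}T_{s}f_{c}}{c}\,m\,\mathbf{S}_{g}$, with $|\mathbf{S}_{g}|^{2}=\xi^{2}$ on the noiseless part. Inserting these into $[\mathbf{I}]_{pq}=\frac{2A^{2}}{\sigma^{2}}\sum_{k,m}\mathrm{Re}\{\partial_{p}\mathbf{S}_{g}^{*}\,\partial_{q}\mathbf{S}_{g}\}$ collapses each entry to $\xi^{2}\gamma$ times an explicit constant times one of the elementary moments $\sum_{k}k,\ \sum_{k}k^{2},\ \sum_{m}m,\ \sum_{m}m^{2}$ over the grid. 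I would also append, as nuisance parameters, the a priori unknown reference phase(s) carried through the separate time- and frequency-domain stages (and, optionally, an unknown $\xi$, which one checks decouples); this both makes the $R$- and $v$-estimates effectively orthogonal along the two grid axes and replaces a bare $\sum k^{2}$ by its centered version.

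Finally I would invert the resulting block FIM and substitute $\sum_{j=0}^{L-1}j=\tfrac{L(L-1)}{2}$ and $\sum_{j=0}^{L-1}j^{2}=\tfrac{L(L-1)(2L-1)}{6}$, so the discrete sums collapse to a factor of the shape $(N_{J}-1)\times(\text{linear in }N_{J})$, and symmetrically in $M_{J}$; folding in the normalization contributed by the two-stage IFFT/FFT processing (the sizes $N$, $M$) and by the combs $K_{\rm{comb}}^{\rm{carrier}},K_{\rm{comb}}^{\rm{symbol}}$ then produces the denominators $M_{J}N(N_{J}-1)(7N_{J}+1)$ and $N_{J}M(M_{J}-1)(7M_{J}+1)$, while the Jacobian factors supply the remaining $c^{2}$ and $1/(\xi^{2}\gamma)$ --- yielding (\ref{eq_CRLB_R_DMRS})--(\ref{eq_CRLB_V_DMRS}). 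The main obstacle will be precisely this last bookkeeping: choosing the nuisance set correctly (this is what fixes whether the moment entering the bound is $\sum_{k}k^{2}$ or its centered form, hence the exact linear factor in $N_{J}$), tracking the IFFT/FFT and comb normalizations so that $N$, $M$, $K_{\rm{comb}}^{\rm{symbol}}$ and $K_{\rm{comb}}^{\rm{carrier}}$ land in the stated positions, and checking the FIM is nonsingular so the bound is attainable; by contrast, the overall scalings $1/(\xi^{2}\gamma)$, $1/(2\pi\Delta f)^{2}$ and $1/((2\pi T_{s})^{2}f_{c})$ drop out at once.
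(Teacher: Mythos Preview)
Your approach is essentially the same as the paper's: both set up the received DMRS grid as a deterministic complex exponential in $(\tau,f_{d,r})$ observed in i.i.d.\ Gaussian noise, form the log-likelihood, compute the $2\times 2$ Fisher information matrix, invert it, and then push the result through the linear change of variables $R=c\tau/2$, $v=c f_{d,r}/(2f_c)$.

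One point of divergence worth flagging: the paper explicitly takes the parameter vector to be $\theta=(\tau,f_{d,r})$ only, with no nuisance phase or amplitude, and obtains the bound from the plain $2\times2$ inverse $CRLB(\tau)=\mathbf{F}_{f_{d,r}f_{d,r}}/(\mathbf{F}_{\tau\tau}\mathbf{F}_{f_{d,r}f_{d,r}}-\mathbf{F}_{\tau f_{d,r}}^{2})$. Your plan instead appends an unknown reference phase (and optionally $\xi$) as nuisances so that the second moments become centered. That is a substantive modeling choice, not just bookkeeping: with the paper's two-parameter model the cross term $\mathbf{F}_{\tau f_{d,r}}\propto\big(\sum_k k\big)\big(\sum_m m\big)$ is nonzero and couples $N_J$ and $M_J$ in the inverse, whereas your nuisance-phase route decouples the two axes and yields clean $(N_J-1)(\cdot)$ and $(M_J-1)(\cdot)$ factors directly. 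Since the paper's proof is only a sketch and does not display the moment sums or the explicit inversion, it does not resolve which device actually produces the particular polynomial $(N_J-1)(7N_J+1)$; you should be prepared to carry both variants through and see which one lands on the stated constants, rather than assuming the nuisance-parameter centering is what the authors intended.
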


\begin{proof}
The received signal is 
\begin{equation}
\label{eq_24}\small
z_{m,n}=\xi A_{m,n} e^{j2 \pi K_{\rm{comb}}^{\rm{symbol}}mT_{s}f_{d,r}} e^{-j2 \pi K_{\rm{comb}}^{\rm{carrier}}n \Delta f \tau} +w_{m,n},
\end{equation}
where $\xi$ is the attenuation factor in the transmission process,  
$w_{m,n}$ is the Gaussian additive white noise obeying  $N(0,\sigma ^2)$,
and $A_{m,n}=\left| x_{m,n}\right|$ is the amplitude of modulation symbol. 
According to Fig. \ref{fig_DMRS_time_frequency_domain}, 
$K_{\rm{comb}}^{\rm{symbol}}$ in this paper is not fixed. 

$z_{m,n}$ with unknown parameters $\theta=(\tau,f_{d,r})$ is observed. 
An estimation of $\theta$ is performed, whose likelihood function is 
\begin{equation}
	\label{eq_25}\small
	\begin{array}{l}
		f\left( {\left. z \right|\tau ,{f_{d,r}}} \right) = \frac{1}{{{{\left( {2\pi {\sigma ^2}} \right)}^{\frac{{MN}}{2}}}}} \cdot\\
		{e^{ - \frac{1}{{2{\sigma ^2}}}\sum\limits_m {\sum\limits_n {{{\left| {{z_{m,n}} - \xi {A_{m,n}}{e^{j2\pi K_{{\rm{comb}}}^{{\rm{symbol}}}m{T_s}{f_{d,r}}}}{e^{ - j2\pi K_{\rm{comb}}^{\rm{carrier}}n \Delta f\tau }}} \right|}^2}} } }}.
	\end{array}
\end{equation}
The log-likelihood function is
\begin{equation}
\label{eq_26}\small
\begin{array}{l}
		L\left( {\left. z \right|\tau ,{f_{d,r}}} \right){\rm{ = }}\ln f\left( {\left. z \right|\tau ,{f_{d,r}}} \right) =  - \frac{{MN}}{2}\ln \left( {2\pi {\sigma ^2}} \right)\\
		- \frac{1}{{2{\sigma ^2}}}\sum\limits_m {\sum\limits_n {{{\left| {{z_{m,n}} - \xi {A_{m,n}}{e^{j2\pi K_{{\rm{comb}}}^{{\rm{symbol}}}m{T_s}{f_{d,r}}}}{e^{ - j2\pi K_{\rm{comb}}^{\rm{carrier}}n\Delta f\tau }}} \right|}^2}} }.
\end{array}
\end{equation}
With $s_{m,n}=\xi {A_{m,n}}{e^{j2\pi K_{{\rm{comb}}}^{{\rm{symbol}}}m{T_s}{f_{d,r}}}}{e^{ - j2\pi K_{\rm{comb}}^{\rm{carrier}}n\Delta f\tau }} $, 
(\ref{eq_26}) can be simplified as	
\begin{equation}
\label{eq_27}\small
L= - \frac{{MN}}{2}\ln \left( {2\pi {\sigma ^2}} \right) - \frac{1}{{2{\sigma ^2}}}\sum\limits_m {\sum\limits_n {{{\left| {{z_{m,n}} - {s_{m,n}}} \right|}^2}} }.
\end{equation}

The Fisher information matrix $\mathbf{F}$ is
\begin{equation}
\begin{aligned}
\label{eq_28}\small
\begin{array}{l}
		\mathbf{F}{\rm{ = }}\left[ {\begin{array}{*{20}{c}}
				{{\mathbf{F}_{\tau \tau }}}&{{\mathbf{F}_{\tau {f_{d,r}}}}}\\
				{{\mathbf{F}_{{f_{d,r}}\tau }}}&{{\mathbf{F}_{{f_{d,r}}{f_{d,r}}}}}
		\end{array}} \right]\\
	=  - \left[ {\begin{array}{*{20}{c}}
				{E\left( {\frac{{{\partial ^2}L}}{{\partial {\tau ^2}}}} \right)}&{E\left( {\frac{{{\partial ^2}L}}{{\partial \tau \partial {f_{d,r}}}}} \right)}\\
				{E\left( {\frac{{{\partial ^2}L}}{{\partial {f_{d,r}}\partial \tau }}} \right)}&{E\left( {\frac{{{\partial ^2}L}}{{\partial {f_{d,r}}^2}}} \right)}
		\end{array}} \right].
\end{array}
\end{aligned}
\end{equation}

Since CRLB matrix for time delay and Doppler estimates is the inverse of the Fisher information matrix, the CRLB matrices of delay and Doppler frequency shift estimation are
\begin{equation}
\label{eq_29}\small
\left[ {\begin{array}{*{20}{c}}
			{CRLB\left( \tau  \right)}&{CRLB\left( {\tau ,{f_{d,r}}} \right)}\\
			{CRLB\left( {{f_{d,r}},\tau } \right)}&{CRLB\left( {{f_{d,r}}} \right)}
\end{array}} \right]{\rm{ = }}{\mathbf{F}^{ - 1}}.
\end{equation}
Thus, we have
\begin{equation}
\label{eq_30}\small
	{\begin{array}{*{20}{l}}
		{CRLB\left( \tau  \right){\rm{ = }}\frac{{{\mathbf{F}_{{f_{d,r}}{f_{d,r}}}}}}{{{\mathbf{F}_{\tau \tau }}{\mathbf{F}_{{f_{d,r}}{f_{d,r}}}} - {\mathbf{F}_{\tau {f_{d,r}}}}{\mathbf{F}_{{f_{d,r}}\tau }}}}},
\end{array}}
\end{equation}

\begin{equation}
\label{eq_32}\small
\begin{array}{*{20}{l}}
	{CRLB\left( {{f_{d,r}}} \right){\rm{ = }}\frac{{{\mathbf{F}_{\tau \tau }}}}{{{\mathbf{F}_{\tau \tau }}{\mathbf{F}_{{f_{d,r}}{f_{d,r}}}} - {\mathbf{F}_{\tau {f_{d,r}}}}{\mathbf{F}_{{f_{d,r}}\tau }}}}}.
\end{array}
\end{equation}

As CRLB is minimum variance of the unbiased estimator, 
mathematics nature of CRLB is the same as that of variance. 
Therefore, the CRLBs for distance and velocity estimation are

\begin{equation}
\begin{split}
\label{eq_33}\small
	CRLB(R) = \frac{{{c^2}}}{4}CRLB(\tau),
\end{split}
\end{equation}
\begin{equation}
\begin{split}
\label{eq_34}\small
	CRLB(v)=\frac{{{c^2}}}{{4f_c^2}}CRLB({{f_{d,r}}}).
\end{split}
\end{equation}
\end{proof}

It can be seen that the longer $T_s$, the smaller the CRLB for velocity estimation and the larger the CRLB for ranging. For a specific application environment, a compromise between velocity estimation and ranging needs to be considered.

\section{Simulation Results and Analysis}
\label{sec4}
The sensing performance of DMRS is verified in this section. 
2D-FFT algorithm \cite{TR:CR21} is employed in distance and velocity estimation. 
The simulation results verify the feasibility of this solution and study the impact of different simulation parameters on sensing performance. The parameters in simulation are shown in Table II. 
\begin{table*}[htbp]
	\label{tab_Simulation parameters}
	\renewcommand\arraystretch{1.2}
	\caption{Simulation parameters}
	\begin{center}
		\begin{tabular}{|c|c|c|}
			\hline
			Symbol&Parameter&Value\\
			\hline
			{$T_{\rm{sample}}$}&\makecell[l]{Sampling interval \quad}&\makecell[l]{$0.016\ {\rm{\upmu s}}$}\\
			\hline
			{$T_{\rm{s}}$}&\makecell[l]{Total OFDM symbol duration \quad}&\makecell[l]{$8.92\ {\rm{\upmu s}}$}\\
			\hline
			\makecell[c]{$T_{\rm{cp}}$}&\makecell[l]{Duration of CP \quad}&\makecell[l]{$0.57\ {\rm{\upmu s}}$}\\
			\hline
			\makecell[c]{$T$}&\makecell[l]{Duration of OFDM symbol \quad}&\makecell[l]{$8.33\ {\rm{\upmu s}}$}\\
			\hline
			\makecell[c]{$\Delta f$}&\makecell[l]{Subcarrier spacing \quad}&\makecell[l]{120\ kHz}\\
			\hline
			\makecell[c]{$f_{\rm{c}}$}&\makecell[l]{Center frequency \quad}&\makecell[l]{24\ GHz}\\
			\hline
			\makecell[c]{$\gamma$}&\makecell[l]{Signal to noise ratio \quad}&\makecell[l]{$\left[-15,10\right]\ \rm{dB}$}\\
			\hline
			\makecell[c]{$N$}&\makecell[l]{Number of subcarriers \quad}&\makecell[l]{256}\\
			\hline
			\makecell[c]{$M$}&\makecell[l]{Number of symbols \quad}&\makecell[l]{140/28}\\
			\hline
		\end{tabular}
	\end{center}
\end{table*}

\subsection{Distance Estimation}

Based on the proposed ISAC signal, detecting a target of a relative range of 48 m with 512 subcarriers is achieved. Referring to the analysis in Section IV, the estimated range $\hat{R}=\frac{ind_{s_{g},m}c}{2N \Delta f}=48.83$ m.

Fig. \ref{fig_distance48-RMSE-CRLB} illustrates the comparison of RMSE and the root CRLB for evaluating distances when utilizing PRS, DMRS, and data signals at varying SNR. Compared to using PRS or data signal, DMRS notably increases the precision of distance measurements at lower SNR thresholds. Observations indicate that the RMSE values for distance calculations using PRS, DMRS, and data signals eventually converge at a common figure. The underlying cause is attributable to the peak index of the IFFT, which is restricted to integer values, thus causing the estimated distances derived from the three ISAC signals to align with the identical gridded values. Moreover, as SNR increases, the RMSE linked with the DMRS approach gets closer to the root CRLB.

As Fig. \ref{fig_distance48-RMSE-SCS} shown, a decrease in subcarrier spacing contributes to a reduced error in distance estimation. This trend is rationalized by the fact that the total bandwidth enlarges in response to enhanced subcarrier spacing, provided the subcarrier numbers remains constant.

\begin{figure}
	\centering
	\includegraphics[width=3.4in,height=3in]{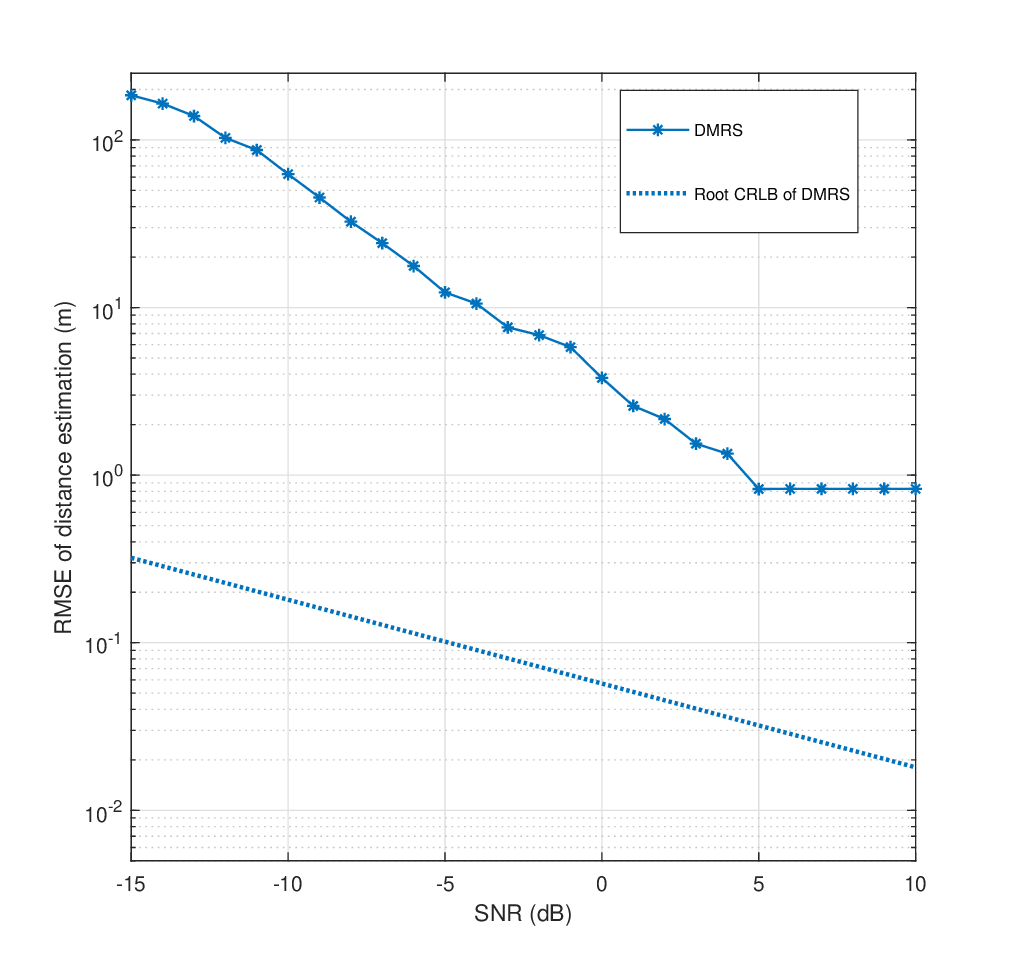}
	\caption{Comparison of RMSEs for distance estimations among PRS, DMRS, and data signal} 
	\label{fig_distance48-RMSE-CRLB}
\end{figure}

\begin{figure}
	\centering
	\includegraphics[width=3.4in,height=3in]{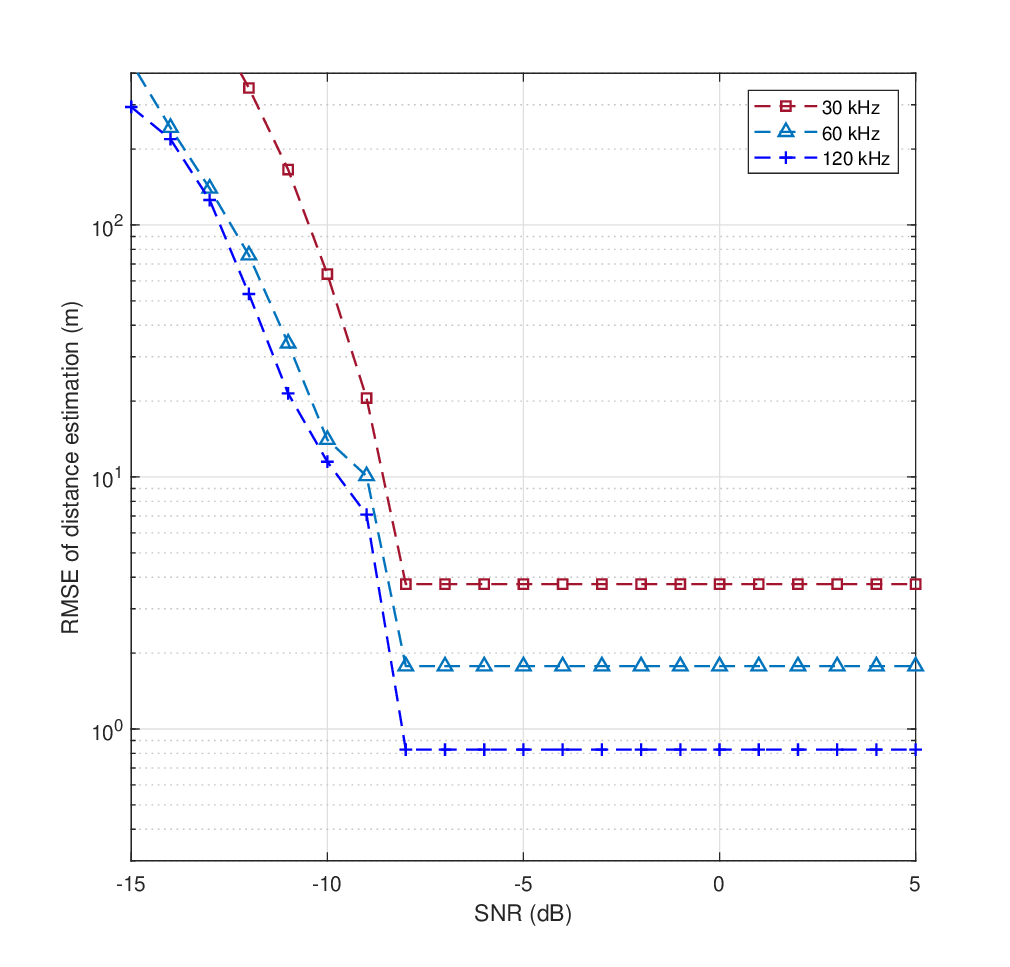}
	\caption{RMSE for distance estimation under different subcarrier spacings} \label{fig_distance48-RMSE-SCS}
\end{figure}

\subsection{Velocity Estimation}

Based on the proposed ISAC signal, detecting a target of a relative velocity of 18 m/s with 140 symbols is realized. Fig. \ref{fig_velocity18-test.eps} shows the echo signal spectrum diagram for velocity estimation, where $\rm index_v$ is the index when searching FFT peak. RMSEs of the single reference signal and 
data signal for velocity estimation are shown in 
Fig. \ref{fig_velocity18-RMSE-CRLB}. 
Due to the limitation of the minimum received SNR for radar detection, 
random and wrong detection result is obtained when the SNR is smaller than $-$10 dB, 
making it meaningless to compare accuracy for 
velocity estimation of different signals. 
With the increase of SNR, the RMSE of DMRS is approaching to the root CRLB. The gap between the RMSE and root CRLB curves for velocity estimation is also caused by the 2D-FFT estimation algorithm.

Fig. \ref{fig_CRLB-T} shows that with the increase of signal duration,
the root CRLB for velocity estimation increases, 
aligning with the trend of RMSE. The reason is that 
the minimum identifiable 
velocity unit by the system is reduced, leading to more precise velocity estimation.

Fig. \ref{fig_CRLB-rv} illustrates the root CRLB for 
distance and velocity estimation with DMRS 
under different subcarrier spacing. 
It is revealed that when the subcarrier spacing is increasing, 
the accuracy of distance estimation is enhancing, 
while the accuracy of velocity estimation is reducing. 
By analyzing 
Fig. \ref{fig_CRLB-T} and the inversely proportional 
relationship between the subcarrier spacing and the signal duration, 
Fig. \ref{fig_CRLB-rv} explains the balance between distance and velocity estimation accuracy.
It's noted that the root CRLBs for both velocity and distance estimations decline as SNR improves.

\begin{figure}[!t]
	\centering
	\includegraphics[width=3.4in,height=3in]{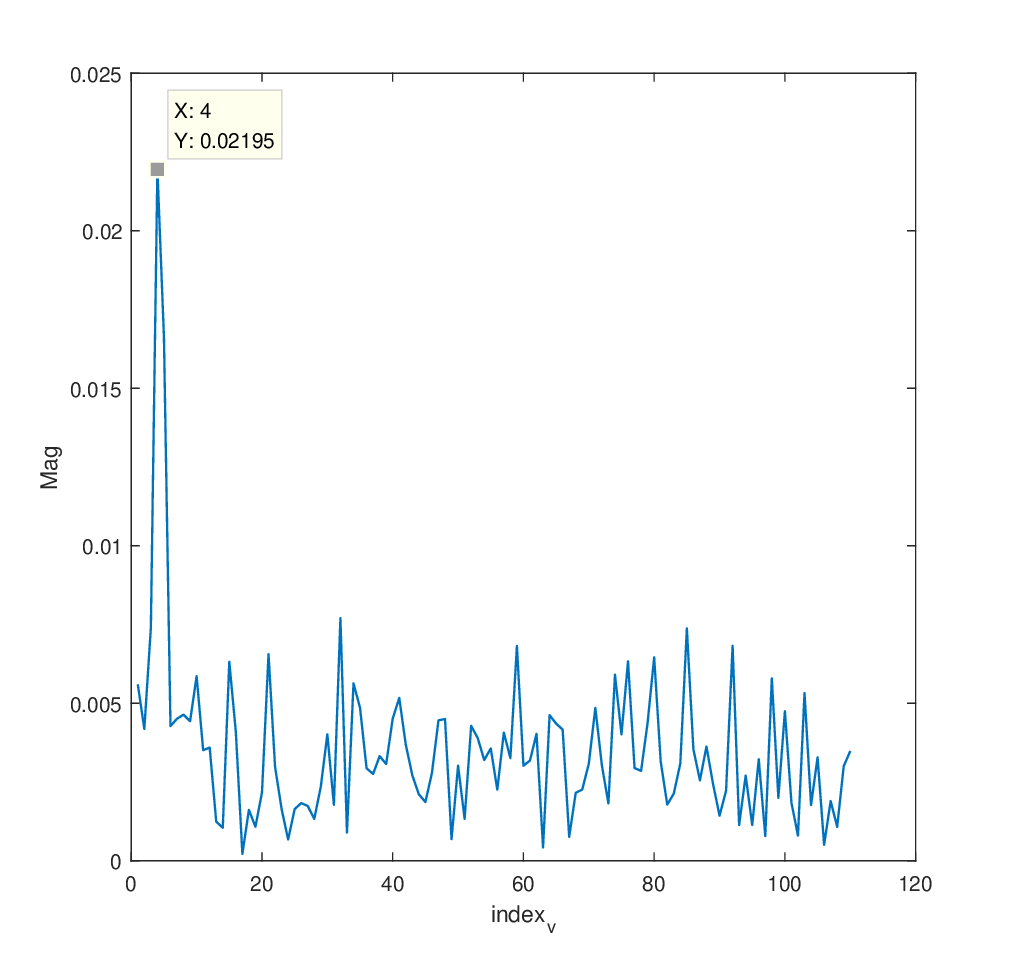}
	\caption{The velocity estimation of the proposed ISAC signal} 
	\label{fig_velocity18-test.eps}
\end{figure}

\begin{figure}[!t]
	\centering
	\includegraphics[width=3.4in,height=3in]{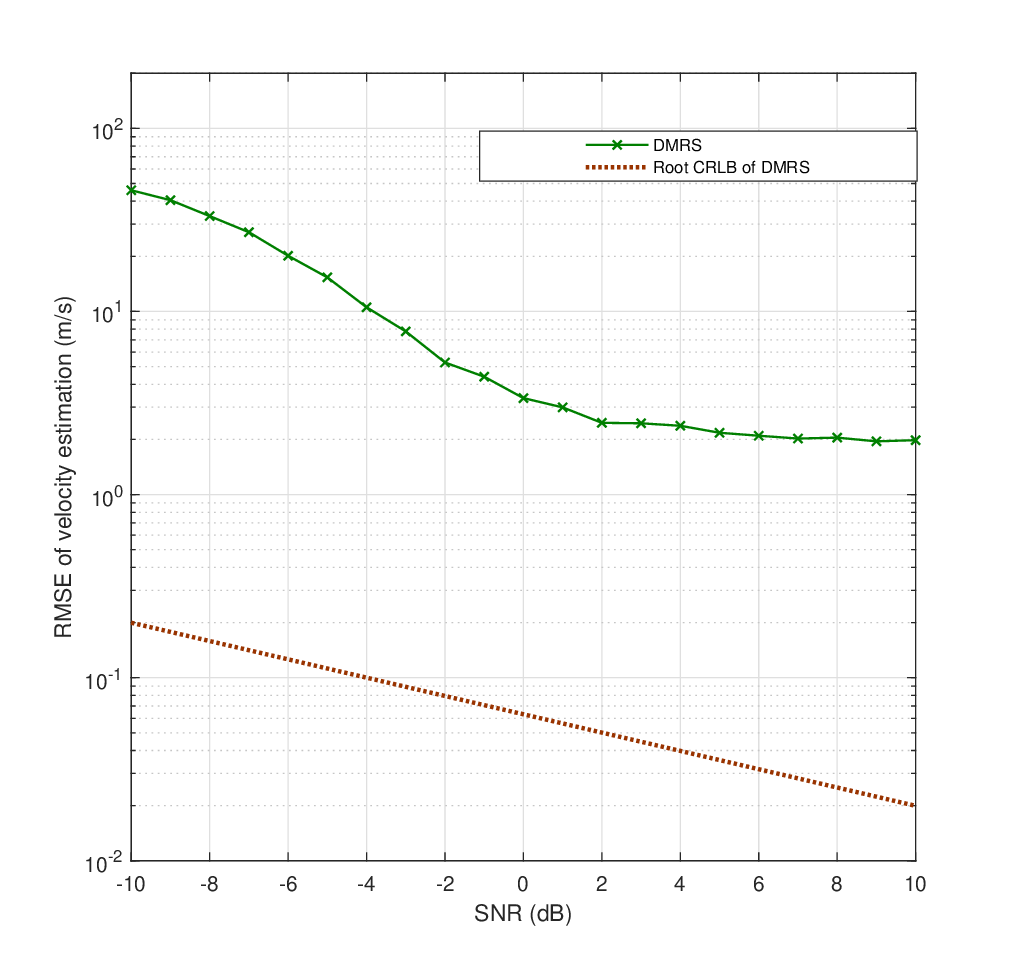}
	\caption{Comparison of RMSEs for velocity estimations among PRS, DMRS and data signal} 
	\label{fig_velocity18-RMSE-CRLB}
\end{figure}

\begin{figure}[!t]
	\centering
	\includegraphics[width=3.4in,height=3in]{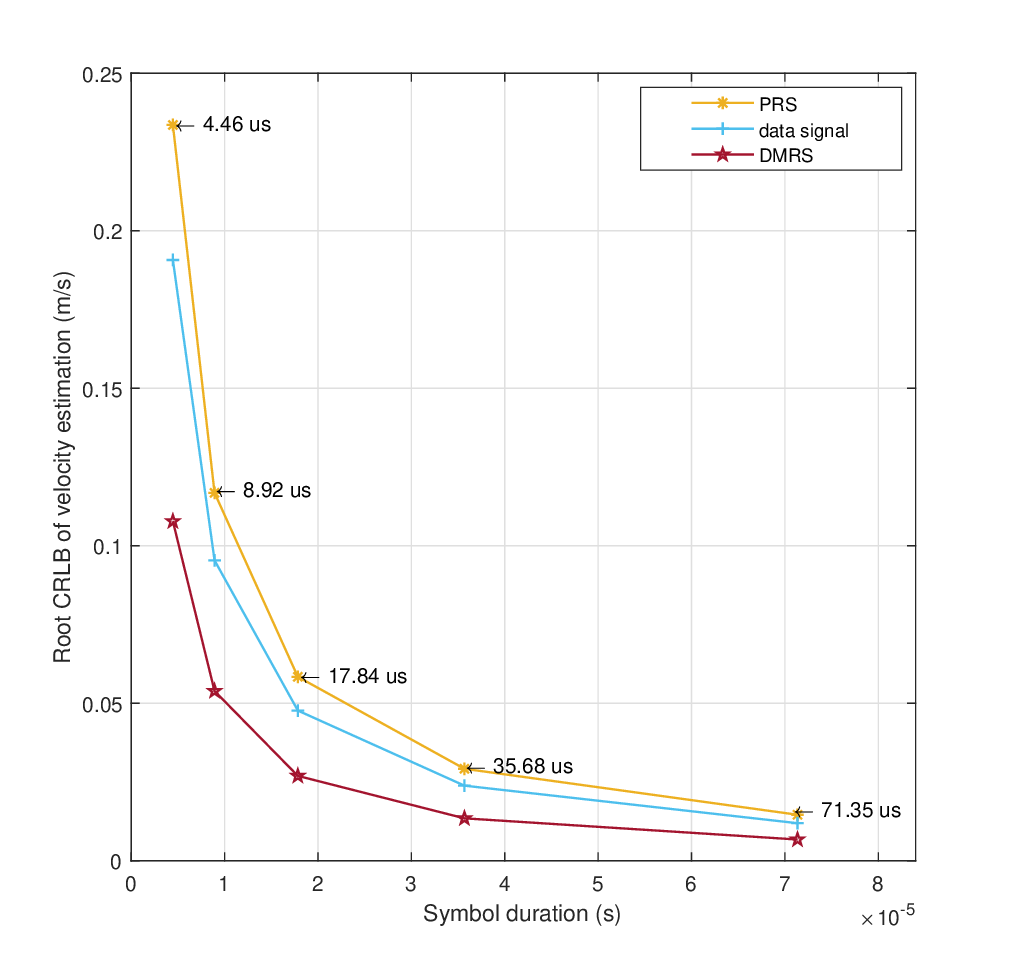}
	\caption{Root CRLB for velocity estimation under different symbol durations} \label{fig_CRLB-T}
\end{figure}

\begin{figure}[!t]
	\centering
	\includegraphics[width=3.4in,height=3in]{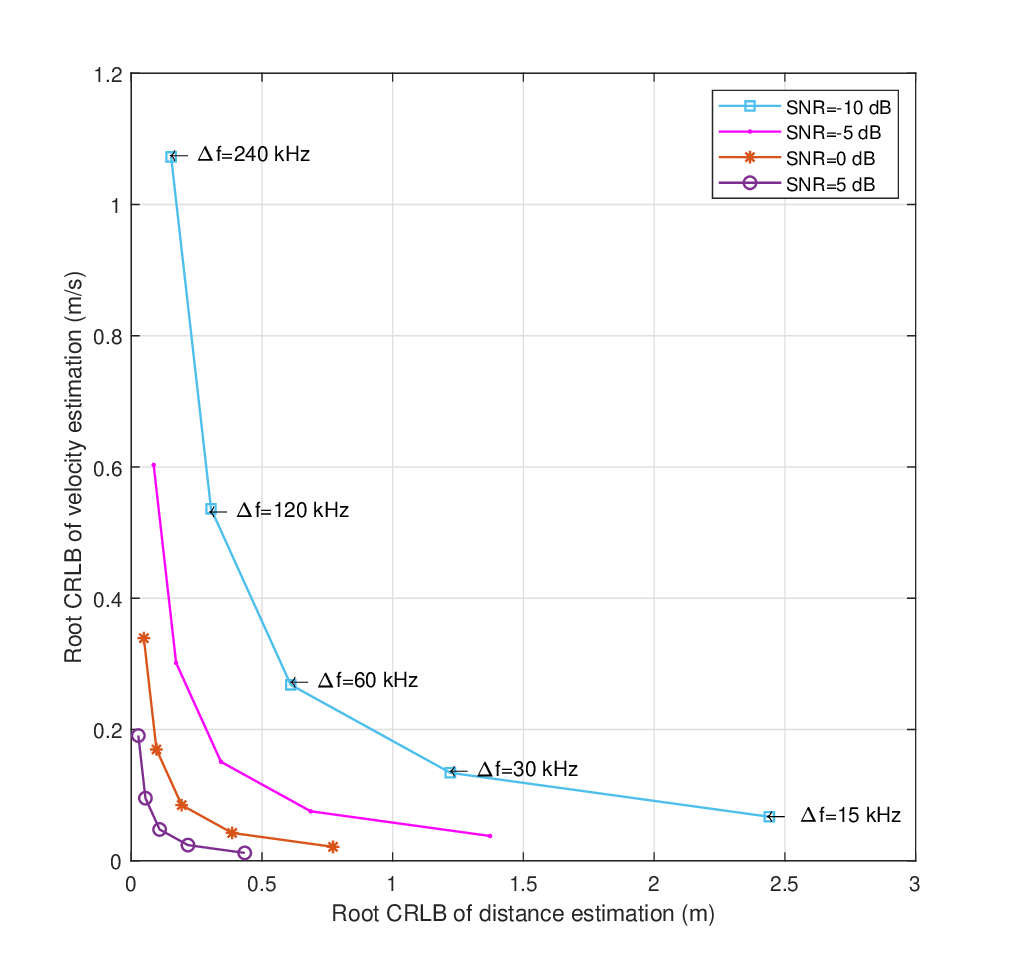}
	\caption{The relationship between root CRLB for distance and velocity estimation } \label{fig_CRLB-rv}
\end{figure}

\section{Conclusion}
\label{sec5}

This paper uses DMRS in 5G NR standard, based on typical Internet of Vehicles scenarios, conducts research on the realization of integrated sensing and communication system, and analyzes the sensing accuracy and its influencing factors in details. The simulation results show that DMRS can significantly improve the sensing accuracy when the SNR is low. There are many types of reference signals in the NR standard, occupying rich physical resources. There are many types of reference signals in 5G NR occupying abundant physical resources. Rationally arranging the time-frequency domain mapping methods of various reference signals to design ISAC signal is worthy of further exploration.

\end{document}